\newcommand{\bs}{\bigskip} 
\def\diag{\mathop{\rm diag}}
\def\where{\mathop{\rm where}}
\def\s.t.{\mathop{\rm s.t.}}
\def\tr{\mathop{\rm tr}}
\def\max{\mathop{\rm max}}
\def\min{\mathop{\rm min}}
\def\bdiag{\mathop{{\rm b}{\text -}{\rm diag}}}
\newcommand{\blind}{0}
\newtheorem{prop}{Proposition}[section]
\begin{document}

\def\spacingset#1{\renewcommand{\baselinestretch}%
{#1}\small\normalsize} \spacingset{1}


\if0\blind
{
  \title{\bf  Visualizing class specific heterogeneous tendencies in categorical data}
  \author{Mariko Takagishi \hspace{.2cm}\\
    Graduate School of Culture and Information Science, Doshisha University\\
    and \\
    Michel van de Velden \\
    Department of Econometrics, Erasmus University Rotterdam}
  \maketitle
} \fi

\if1\blind
{
  \bigskip
  \bigskip
  \bigskip
  \begin{center}
    {\LARGE\bf Visualizing class-specific heterogeneous tendencies in categorical data}
  \end{center}
  \medskip
} \fi

\bigskip
\begin{abstract}
In multiple correspondence analysis, both individuals (observations) and categories can be represented in a biplot that jointly depicts the relationships across categories or individuals, as well as the associations between them. Additional information about the individuals can enhance interpretation capacities, such as by including categorical variables for which the interdependencies are not of immediate concern, but that facilitate the interpretation of the plot with respect to relationships between individuals and categories. This article proposes a new method for adding such information, according to a multiple-set cluster correspondence analysis approach that identifies clusters specific to classes, or subsets of the data that correspond to the categories of the additional variables. The proposed method can construct a biplot that depicts heterogeneous tendencies of individual members, as well as their relationships with the original categorical variables. A simulation study to investigate the performance of this proposed method and an application to data regarding road accidents in the United Kingdom confirms the viability of this approach.  
\end{abstract}

\noindent%
{\it Keywords:}  Multiple Correspondence Analysis, Clustering, Visualization, External Information, Supplementary Variable, Contingency Table.
\vfill

\newpage
\spacingset{1.45} 

%
\section{Introduction}
\label{sec:intro}
%


Correspondence analysis (CA) and multiple correspondence analysis (MCA) are popular methods that support visual interpretations of the associations among categorical variables \cite<e.g.,>{greenacre1984theory}. In MCA, obtained quantifications of categories and individuals can be depicted in a biplot, which indicates not only the associations among categories and among individuals but also those between individuals and categories \cite<e.g.,>{greenacre1993biplots, gower1995biplots}. 

In an MCA biplot, if many individuals choose the same two categories, the quantifications for these categories and corresponding individuals tend to locate in close proximity. Therefore, an MCA biplot enables us to visually identify individuals with similar category choice tendencies. Moving beyond this benefit, adding pertinent external information about individuals can enhance interpretations of MCA biplots. By external information, we refer to information that might not be of use for the estimation of the coordinates, but that may be useful for interpreting the resulting biplot.

Several studies describe ways to incorporate external information about individuals into an MCA biplot \cite<e.g.,>{yanai1986some,yanai1988partial,bockenholt1990canonical,takane1991relationships,van1992equality,bockenholt1994linear,yanai2002partial,hwang2005simultaneous}. \citeA{hwang2002generalized} also show that various objectives for incorporating the external information can be generalized into a linear constraint framework. Here, we focus specifically on external information that consists of a set of categorical variables, and we refer to subsets of these data that correspond to the categories of the external information as classes.

To visually explicate how individuals' tendencies differ depending on these classes, we could integrate external variables before applying MCA, but this approach transforms the information, such that it is no longer external, the information is no longer external and instead becomes part of the original analysis. As an alternative approach, we might seek to establish individual quantifications (i.e., points) visually, according to the classes. For example, if gender is an external variable, points corresponding to men can be colored black, and those corresponding to women are red. This approach incorporates external information corresponding to only one categorical variable at the time. Another option would be to obtain average quantifications for each class. By plotting these average points, as well as the category points of the original (non-external) variables, we can depict the relationship between the external information and the categories. We refer to this notion as the averaging approach.

Yet the averaging approach only reveals the average tendencies of many individuals within a class, obscuring their heterogeneous tendencies. When a relatively small group in a class has a strong tendency toward a particular category that the majority group in the class does not select, this preference would not be visible in a biplot that relies on an averaging approach. Despite representing a minority, such tendencies could be interesting to consider, especially to characterize tendencies by class.

Therefore, we propose a new approach to find class-specific clusters and depict them together with the categories of the (original) variables. The result is a visual depiction of the categories (i.e., category quantifications), together with points that represent clusters for the different classes of data. With this visualization, we can identify different heterogeneous tendencies within a class in a single MCA biplot, as well as perceive the relationships among classes that correspond to the categories of external variables.

The remainder of this paper is organized as follows. In Section 2, we introduce our proposed method and its relationship with existing approaches, including the linear row constraint framework. Then in Section 3, we compare a biplot obtained using the averaging approach and one obtained using our proposed method. The simulation study in Section 4 appraises the proposed method in various external information scenarios; the application of our method to empirical data in Section 5 confirms its appeal.

%
\section{Multiple Set Cluster CA}
\label{sec:prop}
%
In this section, we introduce our approach, which we call multiple-set cluster CA (MSCCA), as an extension of several existing methods, such as cluster CA \cite{van2017cluster}, CA, and the linear row constraint framework.

\subsection{The MSCCA objective function}
\label{sec:prop-obj}

Suppose that we have $N$ observations of $m$ categorical variables, and in conjunction, that, for the same $N$ observations, we have $H$ additional categorical variables  that contain external information. We refer to these $H$ additional variables as supplementary variables. To formulate the MSCCA objective function, we specify some notation upfront. 

Let $q_j$ $(j=1,\ldots,m)$ be the number of categories for the $j$th variable, and let $Q=\sum_{j=1}^mq_j$. We create dummy matrices $\bm{Z}_j$ for the $m$ categorical variables using the categorical data, so the rows of $\bm{Z}_j$ are $(q_j\times 1)$ vectors $\bm{z}_{ji}=(z_{ji\ell})$ $(i=1,\ldots,n\,;\,\ell=1,\ldots,q_j)$, where $z_{ji\ell}=1$ if individual $i$ chooses the $\ell$th category in the $j$th variable, and the other elements are $0$. Similarly, we create dummy matrices for the $H$ supplementary variables, with $\bm{V}_h=(v_{his})$ $(h=1,\ldots,H;\,s=1,\ldots,r_h)$, where $r_h$ is the number of categories for the $h$th supplementary variable. 

In addition, let $K_{hs}$ be the number of clusters for the $s$th category (class) of the $h$th supplementary variable, with $K_h=\sum_{s=1}^{r_h}K_{hs}$. Let $\bm{B}_j$ be the $q_j\times p$ quantification matrix for the categories of the $j$th variable, where $p$ denotes the number of dimensions, and let $\bm{U}_{h}$ and $\bm{G}_{h}$ be $N\times K_{h}$ cluster indicator matrices and $K_{h}\times p$ quantification matrices for cluster centers in the $h$th supplementary variable, respectively. The objective function of MSCCA then can be defined as
\begin{gather}
\min_{\bm{U}_{h}, \bm{G}_{h}, \bm{B}_j} \phi(\bm{U}_{h}, \bm{G}_{h}, \bm{B}_j\,|\,\bm{Z}_{j},\bm{V}_h)=\frac{1}{NHm}\sum_{j=1}^m\sum_{h=1}^H\|\bm{U}_{h}\bm{G}_{h}-\bm{Z}_{j}\bm{B}_j\|^2\label{eq:obj}\\
{\rm s.t.}\quad\frac{1}{Nm}\sum_{j=1}^m\bm{B}_j^{\prime}\bm{Z}_{j}^{\prime}\bm{Z}_{j}\bm{B}_j=\bm{I}_p,\quad\bm{J}_N\bm{U}_{h}\bm{G}_{h}=\bm{U}_{h}\bm{G}_{h}\nonumber\\
\where\,\,\,\underset{(N\times K_h)}{\bm{U}_h}=\left(\begin{array}{ccc}
\bm{u}_{h11}^{\prime} & \cdots & \bm{u}_{h1r_h}^{\prime}\\
\vdots & \ddots & \vdots\\
\bm{u}_{hN1}^{\prime} & \cdots & \bm{u}_{hNr_h}^{\prime}
\end{array}\right)\nonumber\\
\bm{u}_{his}=(u_{his1},\ldots,u_{hisK_{hs}})^{\prime}\nonumber\\
\s.t.\quad\begin{cases}
u_{hisk}\in\{0,1\},\,\,(k=1,\ldots,K_{hs}),\,\,\,\sum_{k=1}^{K_{hs}}u_{hisk}=1 &(v_{his}=1)\\
u_{hisk}=0,\,\,\,(k=1,\ldots,K_{hs}) &(v_{his}=0)
\end{cases}\label{eq:constU}\\
(i=1,\ldots,n\,;\,s=1,\ldots,r_h\,;\,h=1,\ldots,H)\nonumber.
\end{gather}
Here, $\bm{J}_{N}=\bm{I}_{N}-N^{-1}\bm{1}_{N}\bm{1}_{N}^{\prime}$ is the centering matrix, $\bm{I}_{N}$ is an $N\times N$ identity matrix, and $\bm{1}_{N}$ is an $N\times 1$ vector of ones. When we estimate parameters, the number of clusters $K_{hs}$ ($h=1,\ldots,H\,;\,s=1,\ldots,r_h$) must be pre-specified.

The constraint on $\bm{U}_h$ in Equation $(\ref{eq:constU})$ defines a two-level hierarchical cluster structure. Specifically, for each supplementary variable $h$, individuals first are divided into $r_h$ \emph{known} classes, corresponding to the categories of the variable as indicated by $\bm{u}_{hi1},\ldots,\bm{u}_{hir_h}$. Then within each class $s$ ($s=1,\ldots,r_h$), individuals are assigned to $K_{hs}$ unknown clusters as indicated by $u_{his1},\ldots,u_{ihsK_{hs}}$.

We can illustrate the construction of $\bm{U}_h$ with a small example. Suppose that we have five observations and that one supplementary variable, (e.g., $h$), corresponds to gender. In addition, assume we want to find two clusters for the males and one cluster for females, so that $K_{h1} = 2$ and $K_{h2} = 1$. Let observations $i = 1,3,5$ be males where $i = 1,3$ are in the first male cluster and $i = 5$ is in the second one, individuals $i = 2, 4$ are females.

When we consider the cluster indicator vector for individual $i = 1$, $\bm{u}_{h1}$, because we partition the data by gender, the vector is split as $\bm{u}_{h1}^{\prime}=(\bm{u}_{h11}^{\prime}, \bm{u}_{h12}^{\prime})$, where $\bm{u}_{h11}$ and $\bm{u}_{h12}$ denote the cluster indicator vectors of individual $i$ in the male and female classes, respectively. Performing this partitioning for all individuals, we obtain
\begin{align*}
	\bm{U}_{h}=\left(\begin{array}{cc}
		\bm{u}_{h11}^{\prime} & \bm{u}_{h12}^{\prime}\\
		\bm{u}_{h21}^{\prime} & \bm{u}_{h22}^{\prime}\\
		\bm{u}_{h31}^{\prime} & \bm{u}_{h32}^{\prime}\\
		\bm{u}_{h41}^{\prime} & \bm{u}_{h42}^{\prime}\\
		\bm{u}_{h51}^{\prime} & \bm{u}_{h52}^{\prime}
	\end{array}\right)=\left(\begin{array}{ccc}
		1 & 0 & 0\\
		0 & 0 & 1\\
		1 & 0 & 0\\
		0 & 0 & 1\\
		0 & 1 & 0
	\end{array}\right).
\end{align*}

\bs
Then, to relate our method to other methods, we can rewrite Equation $(\ref{eq:obj})$ as
\begin{gather}
\min_{\bm{U}, \bm{G}, \bm{B}} \phi(\bm{U}, \bm{G}, \bm{B}\,|\,\bm{Z},\bm{V})=\frac{1}{NHm}\sum_{j=1}^m\|\bm{U}\bm{G}-\bm{Z}_{j}^H\bm{B}_j\|^2\label{eq:obj2}\\
{\rm s.t.}\quad\frac{1}{Nm}\sum_{j=1}^m\bm{B}_j^{\prime}\bm{Z}_{j}^{H\prime}\bm{Z}^H_{j}\bm{B}_j=\bm{I}_p,\quad\bm{J}_{NH}\bm{U}\bm{G}=\bm{U}\bm{G}\nonumber\\
{\rm where}, \underset{(NH\times q_j)}{\bm{Z}_j^H}=\left(\begin{array}{c}
\bm{Z}_{j}\\
\vdots\\
\bm{Z}_{j}
\end{array}\right),\,\underset{(K\times p)}{\bm{G}}=\left(\begin{array}{c}
\bm{G}_{1}\\
\vdots\\
\bm{G}_{H}\\
\end{array}\right),\label{eq:UZG in obj2}
\end{gather}
\begin{gather*}
\bm{U}=\bdiag(\bm{U}_1, \bm{U}_2,\ldots, \bm{U}_H)=\left(\begin{array}{cccc}
\bm{U}_1 & \bm{0} & \cdots & \bm{0}\\
\bm{0}& \bm{U}_2 & \cdots & \bm{0}\\
\vdots & \vdots & \ddots & \vdots\\
\bm{0}& \bm{0} & \cdots & \bm{U}_H\\
\end{array}\right),
\end{gather*}
and $K=\sum_{h=1}^HK_{h}=\sum_{h=1}^H\sum_{s=1}^{r_s}K_{hs}$. If we set $H=1$ and define $\bm{U}_H$ as a cluster indicator matrix for $K_H$ clusters without the hierarchical clustering structure---that is, $\bm{U}_H=(u_{ik})$, $(i=1,\ldots,N\,;\,k=1,\ldots,K_H)$ where $\sum_{k=1}^{K_H}u_{ik}=1$ and $u_{ik}\in\{0,1\}$---then Equation $(\ref{eq:obj2})$ is equivalent to cluster CA  \cite{van2017cluster}, which is equivalent to GROUPALS \cite{van1989clusteringn} when applied to categorical variables. 

Thus, MSCCA represents an extension of cluster CA that is able to specify the cluster allocation for each class simultaneously in a common low-dimensional space, in which the quantifications for categories $\bm{B}_j$ $(j=1,\ldots,m)$ are optimally estimated for all clusters.

\subsection{Algorithm}
\label{sec:prop-algo}
To estimate the parameters $\bm{U},\bm{G}$, and $\bm{B}_j$ $(j=1,\ldots,m)$, we use an alternating least squares algorithm. 
The updating formulas come from a direct extension of cluster CA \cite{van2017cluster}.

\begin{description}
\item[Step 1: Initialization.] Determine $K_{hs}\,(h=1,\ldots,H\,;s=1,\ldots,r_h)$ and $p$. Set the number of iterations to $t=0$, and set a convergence criterion $\varepsilon$. Then, randomly generate initial clusters for each class.
\item[Step 2: Update $\bm{B}_j$.] Let $\bm{B}=(\bm{B}_1^{\prime},\ldots,\bm{B}_m^{\prime})^{\prime}$ and $\bm{Z}^H=(\bm{Z}_1^H,\ldots,\bm{Z}_m^H)$. Then find $\bm{B}^{(t+1)}$ as 
\begin{gather*}
\bm{B}^{(t+1)}=\sqrt{Nm}\bm{D}^{-1/2}\bm{B}^*\\
\where\quad\frac{1}{m}\bm{D}^{-1/2}\bm{Z}^{H\prime}\bm{J}_{NH}\bm{U}^{(t)}(\bm{U}^{(t)\prime}\bm{U}^{(t)})^{-1}\bm{U}^{(t)\prime}\bm{J}_{NH}\bm{Z}\bm{D}^{-1/2}=\bm{B}\bm{\Lambda}\bm{B}^{*\prime}\\
\bm{D}=\widetilde{\bm{Z}}^{\prime}\widetilde{\bm{Z}},\quad\widetilde{\bm{Z}}=\bdiag(\bm{Z}_1^H,\ldots,\bm{Z}_m^H)
\end{gather*}
\item[Step 3: Update $\bm{G}$.] Obtain $\bm{G}^{(t+1)}$ as follows:
\begin{gather*}
\bm{G}^{(t+1)}=\frac{1}{m}(\bm{U}^{(t)\prime}\bm{U}^{(t)})^{-1}\bm{U}^{(t)\prime}\bm{J}_{NH}\bm{ZB}^{(t+1)}
\end{gather*}
\item[Step 4: Update $\bm{U}$.] To obtain $\bm{U}_{h}^{(t+1)}$, the update proceeds by row. Specifically, each element in the $i$th row of $\bm{U}_{h}$, or $\bm{u}_{his}=(u_{hisk})$ $(k=1,\ldots,K_{hs})$, gets updated as follows: If $v_{his}=1$, 
\begin{gather*}
u_{hisk}^{(t+1)}=
\begin{cases}
1 &(k=\underset{\ell\in\{1,\ldots,K_{hs}\}}{\arg\min}\|\bm{f}_{i}-\bm{g}_{hs\ell}^{(t+1)}\|^2)\\
0 &({\rm others})
\end{cases}
\end{gather*}
and otherwise, $u_{hisk}^{(t+1)}=0$. Here, $\bm{f}_{i}$ is the $i$th row of $\bm{J}_{N}\bm{Z}\bm{B}^{(t+1)}$, and $\bm{g}_{hsk}^{(t+1)}$ is the cluster center of the $k$th cluster in the $s$th category in the $h$th supplementary variable.
\item[Step 5: Convergence test] Compute $\phi^{(t)}$, the value of the objective function from Equation $(\ref{eq:obj})$, using updated parameters. For $t>1$, if $\phi^{(t)}-\phi^{(t-1)}<\varepsilon$, terminate; otherwise, let $t=t+1$ and return to Step 2.
\end{description}

\subsection{Biplots}
\label{sec:prop-biplot}

In this subsection, we show how MSCCA can be used to construct a biplot. In \citeA{van2017cluster}, cluster CA is formulated as a maximization problem. Accordingly, the MSCCA in Equation ($\ref{eq:obj2}$) can be rewritten as the following maximization problem:
\begin{gather}
\max_{\bm{U},\bm{B}} \psi(\bm{U}, \bm{B}\,|\,\bm{Z}^H)=\tr\bm{B}^{\prime}\bm{Z}^{H\prime}\bm{J}_{NH}\bm{U}^{\prime}(\bm{U}^{\prime}\bm{U})^{-1}\bm{U}^{\prime}\bm{J}_{NH}\bm{Z}^H\bm{B}\label{eq:cca}\\
{\rm s.t.}\quad\frac{1}{NHm}\sum_{j=1}^m\bm{B}_j^{\prime}\bm{Z}_{j}^{H\prime}\bm{Z}_{j}^H\bm{B}_j=\bm{I}_p\nonumber
\end{gather}
The proof for the equivalence of Equations ($\ref{eq:obj2}$) and ($\ref{eq:cca}$) is in Proposition $\ref{prop:groupals and cca}$ of Appendix $\ref{sec:proof}$. When we leave $\bm{U}$ fixed, maximizing Equation ($\ref{eq:cca}$) is equivalent to minimizing
\begin{gather}
\min_{\bm{G}, \bm{B}} \phi^{CA}(\bm{G}, \bm{B}\,|\,\bm{Z}^H,\bm{V}, \bm{U})=\|\widetilde{\bm{P}}-\bm{D}_r^{1/2}\bm{G}\bm{B}^{\prime}\bm{D}_c^{1/2}\|^2\label{eq:DPDmin}\\
\s.t.\quad\frac{1}{Nm}\bm{B}^{\prime}\bm{D}_c\bm{B}=\bm{I}_p\nonumber\\
\where\quad\widetilde{\bm{P}}=\bm{D}_r^{-1/2}(\bm{P}-\bm{rc}^{\prime})\bm{D}_c^{-1/2}\label{eq:DPD}\\
\bm{P}=(Nm)^{-1}\bm{U}^{\prime}\bm{Z}^H,\,\,\,\bm{r}=\bm{P1}_{Q},\,\,\bm{c}=\bm{P}^{\prime}\bm{1}_{K},\quad\bm{D}_r=\diag(\bm{r}),\quad\bm{D}_c=\diag(\bm{c})\nonumber
\end{gather}
The proof of the equivalence is available from \citeA{van2017cluster}. Here, $\bm{P}$ indicates a $K\times Q$ scaled contingency table of clusters for each class (row) and category (column), and each element in $\bm{rc}^{\prime}$, $r_kc_{\ell}$ $(k=1,\ldots,K;\,\ell=1,\ldots,Q)$, indicates the scaled expected frequency with an assumption of independence between the $k$th cluster and the $\ell$th category. Thus, the matrix $\widetilde{\bm{P}}$ represents the standardized deviations from the the assumption of independence between cluster membership and the categorical variables. 

From Equation $(\ref{eq:DPDmin})$, it follows that the inner product of $\bm{D}_r^{1/2}\bm{G}$ and $\bm{D}_c^{1/2}\bm{B}$ approximates the matrix of standardized deviations from independence, $\widetilde{\bm{P}}$. That is, in MSCCA, we can use $\bm{G}$ and $\bm{B}$ to construct a biplot in which a greater the inner product of the $k$th row vector of $\bm{G}$ and the $\ell$th row vector in $\bm{B}$ generally indicates a stronger association between the $k$th cluster and the $\ell$th category. 

Note that in the resulting biplot, the points of the row and column are not necessarily similarly spread \cite<e.g.,>{gower2010area}. In this case, the points can be scaled using a constant, such that the average squared deviation from the origin of the row and column points is the same. See \citeA{van2017cluster} for detail.

%
\subsection{Relationship to the linear row constraint approach}
\label{sec:relation}
%

\citeA{hwang2002generalized} show that several approaches for incorporating external information about individuals into an MCA biplot can be generalized, as a linear row constraint framework. 

To add linear row constraints in MCA, we formulate the following objective function
\begin{gather}
\min_{\bm{G}, \bm{B}} \phi^{const}(\bm{G}, \bm{B}\,|\,\bm{Z}_j, \bm{V}_h)=\frac{1}{Nm}\sum_{j=1}^{m}\|\bm{C}\bm{F}-\bm{Z}_{j}\bm{B}_{j}\|^2\label{eq:obj const mca}\\
{\rm s.t.}\quad\frac{1}{Nm}\sum_{j=1}^{m}\bm{B}_{j}^{\prime}\bm{Z}_{j}^{\prime}\bm{Z}_{j}\bm{B}_{j}=\bm{I}_p,\quad\bm{J}_N\bm{CF}=\bm{CF},\nonumber
\end{gather}
where $\bm{C}$ is the $N\times N$ matrix that contains linear row constraints for the quantifications. If $\bm{C}=\bm{I}$, the problem reduces to the homogeneity formulation of MCA. 

The choice of $\bm{C}$ depends on the objective that underlies the incorporation of the external information. For example, if we were to use $\bm{C}=\bm{V}(\bm{V}^{\prime}\bm{V})^{-1}\bm{V}^{\prime}$, where $\bm{V}=\bdiag(\bm{V}_1,\ldots,\bm{V}_H)$, and we inserted $\bm{Z}_j^H$ for $\bm{Z}_j$, then Equation ($\ref{eq:obj const mca}$) would produce the averaging approach we described previously, because the class (category) would be represented by the average quantification of individuals corresponding to that class. Alternatively, if we aimed to ``remove" the effect of external information from a biplot, then we might use $\bm{C}=\bm{I}-\bm{V}(\bm{V}^{\prime}\bm{V})^{-1}\bm{V}^{\prime}$ \cite<e.g.,>{takane1991principal,takane2002generalized,hwang2002generalized}, which is equivalent to deducting the class conditional means from the data. For example, if as supplementary variable we have gender, the mean of all males is deducted from all male observations.

Although MSCCA follows a different approach from these two examples to incorporate external information, we can reformulate this method to fit into the linear row constraint framework. In particular, for a fixed $\bm{U}$, the MSCCA objective function in Equation $(\ref{eq:obj2})$ can be rewritten as a minimization problem:
\begin{gather}
\min_{\bm{G}, \bm{B}} \phi^{MSCCA}(\bm{G}, \bm{B}\,|\,\bm{Z}, \bm{U}, \bm{V})=\frac{1}{NHm}\sum_{j=1}^{m}\|\bm{C}\bm{F}-\bm{Z}_{j}^H\bm{B}_{j}\|^2\label{eq: mscca lin cons}\\
{\rm s.t.}\quad\frac{1}{Nm}\sum_{j=1}^{m}\bm{B}_{j}^{\prime}\bm{Z}_{j}^{H\prime}\bm{Z}_{j}^H\bm{B}_{j}=\bm{I}_p,\quad\bm{J}_{NH}\bm{CF}=\bm{CF}\nonumber\\
{\rm where}\quad\bm{C}=\bm{U}(\bm{U}^{\prime}\bm{U})^{-1}\bm{U}^{\prime}\nonumber\\
\underset{(NH\times K)}{\bm{U}}=\bdiag(\bm{U}_1, \bm{U}_2,\ldots, \bm{U}_H),\,\underset{(NH\times q_j)}{\bm{Z}_j^H}=\left(\begin{array}{c}
\bm{Z}_{j}\\
\vdots\\
\bm{Z}_{j}
\end{array}\right),\,\underset{(K\times p)}{\bm{G}}=\left(\begin{array}{c}
\bm{G}_{1}\\
\vdots\\
\bm{G}_{H}\\
\end{array}\right),\nonumber
\end{gather}
where, $\bm{U}$ still features the hierarchical cluster structure constraint imposed by Equation ($\ref{eq:constU}$). From this formulation, it immediately follows that  MSCCA represents a special case of Equation $(\ref{eq:obj const mca})$, with $\bm{C}=\bm{U}(\bm{U}^{\prime}\bm{U})^{-1}\bm{U}^{\prime}$. Proposition $\ref{prop:CCA and linear const}$ in Appendix $\ref{sec:proof}$ offers a proof of the equivalence of Equations $(\ref{eq:obj const mca})$ and $(\ref{eq: mscca lin cons})$.

%
\section{Numerical illustration of an MSCCA biplot}
\label{sec:numex}
%

\begin{table}[t]
	\centering
	\caption{{\small Categories of variables of artificial data for the simple illustration}}
	\bs
	\scalebox{0.9}{ 
		\begin{tabular}{lll}
			\hline
			Variable type & Variable name & Category\\
			\hline
			Variables to estimate quantifications & Meal & Western, Asian \\
			& Drink & Fruits juice, Tea, Alcohol\\
			Supplementary variables & Nationality & American, Japense \\
			& Gender & Male, Female\\
			\hline
		\end{tabular}
	}
	\label{tab:catename of sim}
\end{table}

In this section, we present a small example, using artifical data, to illustrate how MSCCA works. With this example, we zoom in specifically on the differences between MSCCA and the averaging approach for the visualization of heterogeneous tendencies. 

To start, we generate categorical data for 200 individuals that represent two categorical variables (meal and drink preference), and two supplementary variables (nationality and gender). Table $\ref{tab:catename of sim}$ contains the variables and corresponding categories. With this analysis, we seek to determine if different tendencies, with respect to the meal and drink preferences, emerge for groups of individuals, depending on their nationality and gender.

We generated the data to establish three true clusters in the full data set. Individuals in the first cluster choose ``Western meal" for the meal variable, and ``fruit juice" for the drink variable (W\&J), those in the second cluster choose ``Asian meal" and ``Tea" (A\&T), and in the third cluster, individuals choose ``Western meal" and ``alcohol" (W\&A). The frequency distribution of the generated artificial data over each cluster in each class is shown in Figure $\ref{fig:bar}$, revealing there are two clusters for Americans, Asians and females and three clusters for males.

\begin{figure}
	\begin{center}
		\includegraphics[width=12cm]{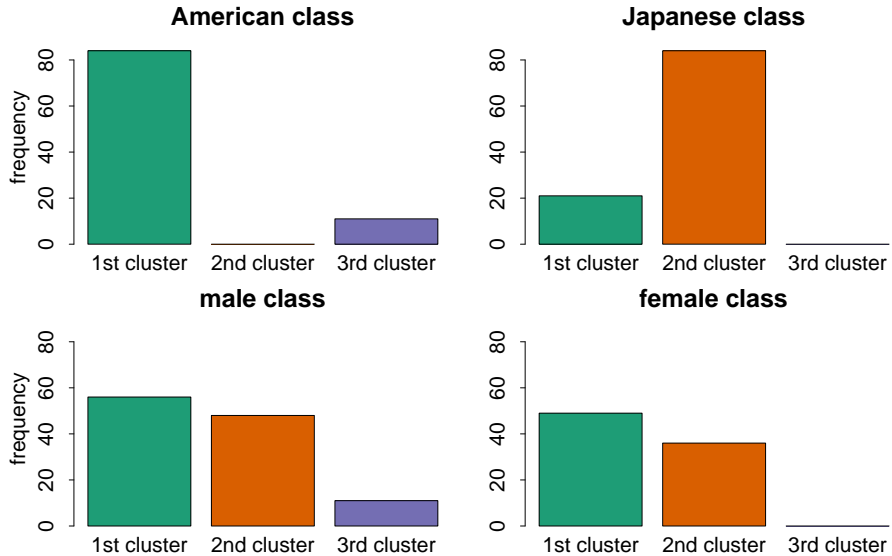}
		\caption{{\small Frequency distributions for each combination of supplementary categories for each true cluster}} 
		\label{fig:bar}
	\end{center}
\end{figure}

The biplot that results from an averaging approach, in Figure $\ref{fig:plot numex}$ (left), clearly reveals overall tendencies of many Americans and Japanese consumers, strongly associated with W\&J and A\&T, respectively. However, the much smaller number of individuals who choose ``alcohol" makes it impossible to specify who (i.e., which nationality or gender) makes this choice. 

In contrast, by obtaining clusters for each class, the MSCCA biplot makes the tendencies of this relatively small number of individuals visible. When we use the correct number of clusters for each class, the MSCCA biplot result in Figure $\ref{fig:plot numex}$ (right) clearly reveals that a small number of male Americans choose ``alcohol". In addition, this biplot still depicts the tendencies of the larger groups, as obtained in the averaging approach. That is, MSCCA reveals the tendencies of small groups, without losing the information about the tendencies of larger groups.

In Appendix $\ref{sec:how it works}$, we use a CA framework to provide some additional insights into how (and when) the MSCCA and averaging approaches differ with respect to their depiction of heterogeneous tendencies.

\begin{figure}
	\begin{center}
		\includegraphics[width=14cm]{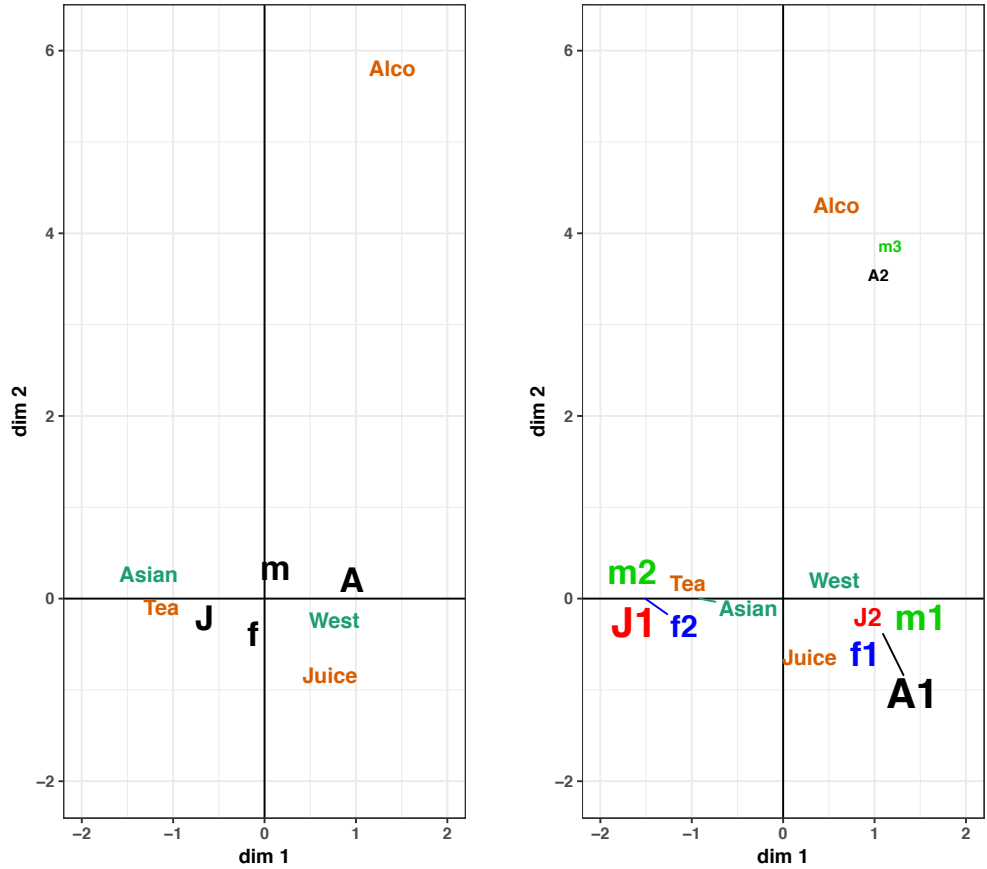}
		\caption{{\small Results (Left) Averaging approach. Average points are labelled ``A" (American), ``J" (Japanese), ``m" (male), and ``f" (female), and the label sizes correspond to class sizes. Other character labels indicate category points. (Right) MSCCA. The points labeled ``A," ``J," ``m," or ``f" followed by a number, correspond to the cluster points for each class.}}
		\label{fig:plot numex}
	\end{center}
\end{figure}

%
\section{Simulation}
\label{sec:sim}
%

We conducted a simulation study to evaluate the performance of MSCCA in different scenarios. By using simulations, we can determine the effects of the supplementary variables on the accuracy of the clustering and biplots achieved through MSCCA.

\subsection{Data Generation}
\label{sec:sim-data}

The data generation process consists of two steps: generating an $N \times m$ data matrix, and generating $N \times H$ matrix of supplementary variables. First, we start by dividing the $m$ variables into two groups: active variables that relate to the clustering structure, and noise variables that are unrelated to the cluster structure. Furthermore, we determine the cluster allocation with a multinomial distribution. To generate data for the active variables, we assign one category for each variable a high probability of $0.8$. Then the (low) probabilities assigned to the remaining categories are determined according to $\bar{\bm{p}}=(\bar{p}_{\ell})$ $(\ell=1,\ldots,q-1)$, where $\bar{\bm{p}}=((1-0.8)\times(p_1,\ldots,p_{q-1})/\sum_{\ell=1}^{q-1}p_{\ell})$ and $p_{\ell}\sim U(0, (1-0.8))$. The high probability categories are cluster specific. Then to generate noise variables, we use a multinomial distribution in which the proportion for each category is $1/q$. In our simulation study, we set the ratio of active to noise variables to $1:1$. 

Second, to generate the data matrix corresponding to the $H$ supplementary variables, we consider two scenarios: balanced and unbalanced distributions over the categories. In the balanced scenarios, the multinomial probabilities for all categories are equal. In the unbalanced scenario, the probabilities are $1/S, \ldots, r_h/S$, where $r_h$ denotes the number of categories for the supplementary variable, and $S=\sum_{s=1}^{r_h}s$.

\subsection{Simulation study design}
\label{sec:sim-design}
To assess the performance of the methods in different settings, we fix the number of observations $N = 300$ and the number of variables $m = 10$. Then, we consider a full factorial design with the number of categories for each variable $q = 5, 7$; the number of clusters $K = 2,3$; the number of supplementary variables $H = 1,3$; and the number of categories for the supplementary variables $r_h = 3, 5$. Finally, for the supplementary variables we note the balanced and unbalanced scenarios. For each combination of parameters in the simulation, we randomly generate 100 different $N\times m$ data matrices and $N\times H$ supplementary variable matrices. For each data set, we apply MSCCA using 100 random initial values.

\subsection{Evaluation}
\label{sec:sim-eval}
We evaluate the performance of the proposed methods by checking the accuracy of both the clustering and the biplots. To measure clustering accuracy, we turn to the Adjusted Rand Index (ARI, \citeNP{hubert1985comparing}). The ARI assesses the similarity between two cluster allocations, so it takes a value of 1 for a perfect recovery, and this value decreases as performance worsens. We calculate the ARI for the class-specific clustering results separately.

For biplot accuracy, we use a goodness-of-fit (GF) index \cite<e.g.,>{gabriel2002goodness}, which is equivalent to the so-called congruence coefficient \cite<e.g.,>{lorenzo2006tucker}. The GF between configurations $\bm{Y}$ and $\bm{H}$ is defined as
\begin{align*}
	{\rm GF}(\bm{Y},\bm{H})=\frac{\tr^2(\bm{Y}^{\prime}\bm{H})}{\tr(\bm{Y}^{\prime}\bm{Y})\tr(\bm{H}^{\prime}\bm{H})}=\cos^2(\bm{Y},\bm{H}).
\end{align*}
Therefore, we calculate the GF between $\bm{Y}$ and $\bm{H}$, where $\bm{H}=\bm{GB}^{\prime}$ (with $\bm{G}$ and $\bm{B}$ as the MSCCA solutions) and $\bm{Y}=\tilde{\bm{P}}^{true}=\bm{D}_r(\bm{P}^{true}-\bm{rc}^{\prime})\bm{D}_c$, such that $\bm{P}^{true}=\bm{U}^{\prime}\bm{Z}$ and $\bm{U}$ is the true cluster allocation. Note that by definition, GF$\in[0,1]$. In our calculation of the GF index, we assume that the true cluster allocation is known. Therefore, cluster accuracy does not affect the GF index.  

\subsection{Result}
\label{sec:sim-res}
\begin{figure}
	\begin{center}
		\includegraphics[width=10cm, angle=270]{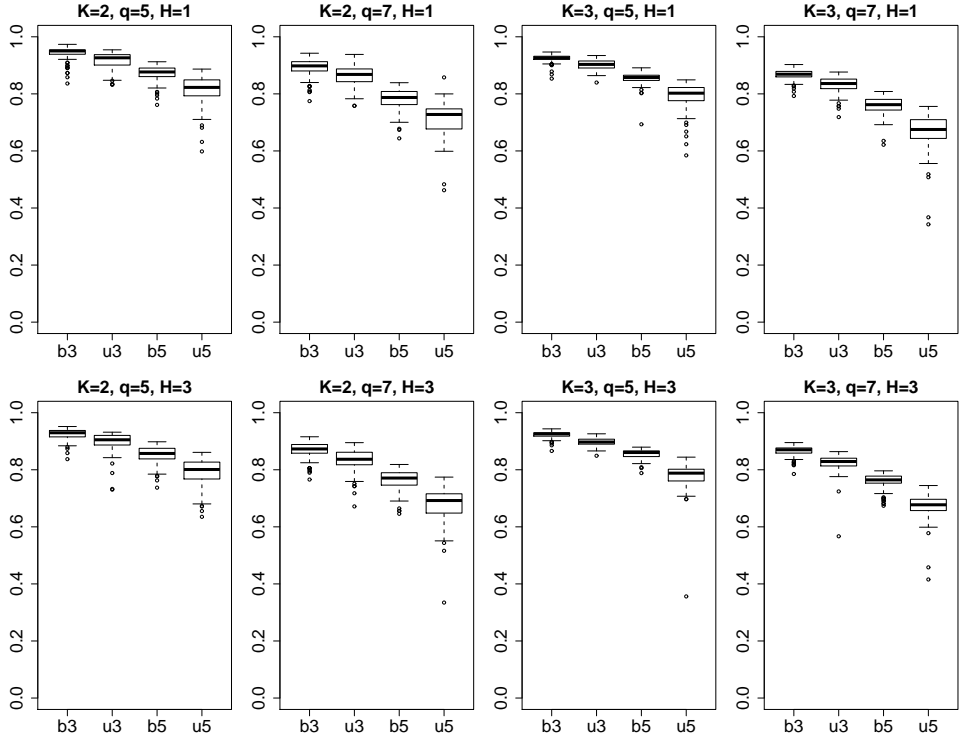}
		\caption{{\small Boxplot of GF for each case. On the horizontal axis, b3 indicates balanced categories in supplementary variables, and $r_h=3$ for all $h=1,\ldots,H$; while b5 has balanced categories with $r_h=5$ for all $h$. Similarly, u3 and u5 indicate unbalanced categories with $r_h=3,5$, respectively.}}
		\label{fig:simres_GF}				
	\end{center}
\end{figure}

\begin{figure}
	\begin{center}
		\includegraphics[width=10cm, angle=270]{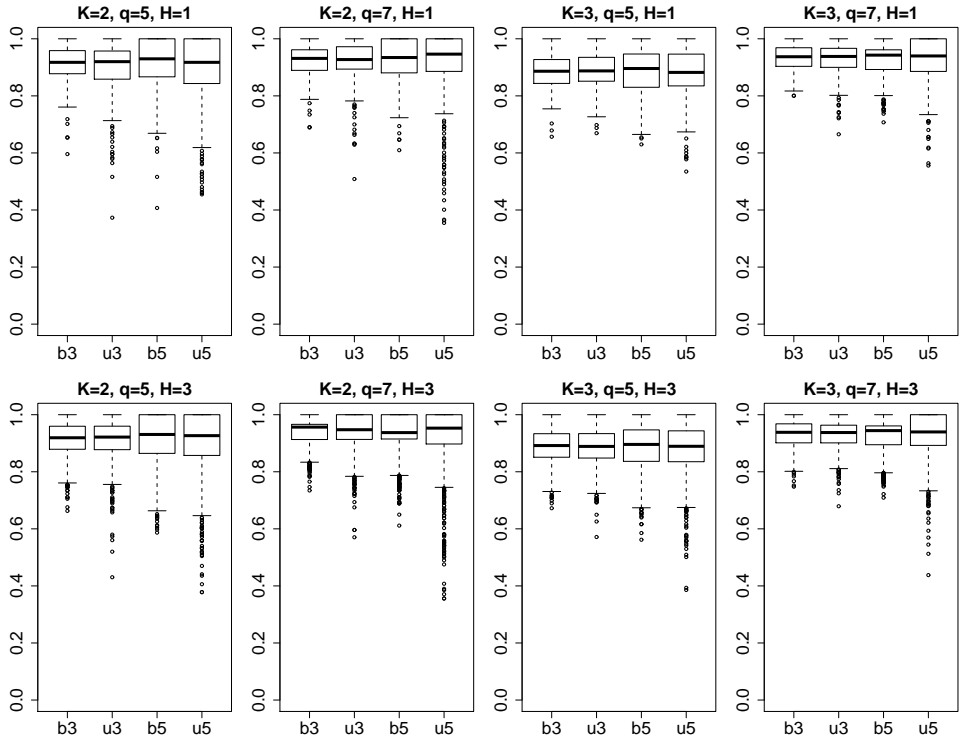}
		\caption{{\small Boxplot of ARI for each case.}}
		\label{fig:simres_ARI}
	\end{center}
\end{figure}

The results for the GF index in Figure $\ref{fig:simres_GF}$ indicate that it tends to decrease as the number of categories $q$ increases. The number of supplementary variables $H$ does not substantially affect the GF. Rather, the GF tends to be somewhat better when there are fewer categories $r_h$ in the supplementary variables and when the distribution over the categories is balanced.

The cluster retrieval results in Figure $\ref{fig:simres_ARI}$ show that overall, ARI decreases when the number of clusters $K$ increases and when the number of categories $q$ decreases. In contrast, the number of supplementary variables $H$ and whether the distributions over the categories are unbalanced do not affect the median ARI substantially. However, for more supplementary variables with balanced distributions, we note more outlying results. In addition, the number of categories for the supplementary variables appears to affect variance in the ARI results, such that the ARI for $r_h = 5$ has greater variance than that for $r_h = 3$.

\subsection{Conclusions from the simulation study}
\label{sec:sim-conclu}

The simulation study shows that the number of supplementary variables does not affect the accuracy of the biplot or the clustering. We can increase the number of supplementary variables $H$ without harming the accuracy of the results. However, increasing the number of supplementary variables $H$ leads to more points in the biplot, resulting in a more complicated visualization. We thus assert that $H$ can be increased as long as the biplot remains interpretable.

In addition, though the clustering results are hardly affected by the nature of the supplementary variables (i.e., number of categories $r_h$, and whether the distribution over the categories is balanced), the simulation study indicates that biplot accuracy is affected. In particular, using supplementary variables with more categories and unbalanced distributions over categories leads to a decrease in biplot accuracy. In conclusion, when there are several candidates for supplementary variables, it is better to select balanced supplementary variables with fewer categories.

%
\section{Application}
\label{sec:real}
%

In this section, we illustrate the proposed method using data that reflect road accidents in the United Kingdom. With these data, we seek to determine how the circumstances in which a car accident occurs depends on the type of accident. We compare the results using MSCCA, the averaging approach, and cluster CA, to establish how each method would visualize the relationships.

\subsection{Data and Setting}
\label{sec:real-data}

\begin{table}
	\centering
	\caption{{\small Categories for each variable and their corresponding labels in biplots and descriptions.}}
	\bs
	\scalebox{0.8}{ 
		\begin{tabular}{llll}
			\hline
			Variable type & Variable name & Label & Description\\
			\hline
			Non-supplementary variables & Light conditions & Dark0 & Daylight\\
			&  & Dark1 & Darkness: street lights present and lit \\
			&  & Dark2 & Darkness: street lights present but unlit \\
			&  & Dark3 & Darkness: no street lighting \\
			& Weather conditions & Fine & Fine without high winds \\
			&  & Rain & Raining without high winds \\
			&  & Snow & Snowing without high winds \\
			&  & Fine\_w & Fine with high winds \\
			&  & Rain\_w & Raining with high winds \\
			&  & Snow\_w & Snowing with high winds \\
			&  & Fog & Fog or mist — if hazard \\
			&  & Other & Other \\
			& Road surface conditions & Dry & Dry \\
			&  & Wet & Wet / Damp \\
			&  & Snow & Snow \\
			&  & Frost & Frost / Ice \\
			&  & Flood & Flood (surface water over 3cm deep) \\
			& Speed Limit & $\sim$30 & Speed limit is up to 30km/h\\
			&  & $\sim$70 & Speed limit is up to 70km/h\\
			Supplementary variables & Casualty class & Driver & Casualty is one driver\\
			&  & Ped & Casualty is one pedestrian\\
			& Area & Urban & Occurring in urban area\\
			&  & Rural & Occurring in rural area\\
			\hline
		\end{tabular}
	}
	\label{tab:catename}
\end{table}

The data were obtained from the U.K. Department for Transport's road safety statistics (https://data.gov.uk/dataset/road-accidents-safety-data). In these data, observations are accidents, and the (categorical) variables refer to information about those accidents. For this illustration, we selected accidents that occurred in January 2016, that involved one casualty (either a driver or a pedestrian), and in which at most two parties were involved. The resulting data set contains $N = 3,026$ observations.
 
Regarding the circumstances of the accident, we consider four (i.e., $m = 4$) variables: lighting conditions, weather conditions, road surface conditions, and speed limit. For the types of accident, we select two ($H = 2$) supplementary variables: casualty class and area. Table $\ref{tab:catename}$ summarizes the variables and their categories. 

As is true of any cluster analysis method, determining the number of clusters is not trivial. In MSCCA, the number of clusters must be prespecified for each class $K_{hs}$ ($h = 1,\ldots, H ; s = 1,\ldots, r_h$). For this study, we use the Krzanowski-Lai index (KL index, \citeNP{krzanowski1988criterion}) to determine the number of clusters for each class, with separate cluster CA analyses. Specifically, we apply cluster CA to class-specific data (i.e., data corresponding to one category of the supplementary variables) to determine the number of clusters $K_{hs}$ that corresponds to the optimal KL index. This procedure results in four clusters for the driver class, five clusters for the pedestrian class, four in the urban class and four clusters for the rural class (i.e., $K_{11} = 4$, $K_{12} = 5$, $K_{21} = 4$ and $K_{22} = 4$). Henceforward, we refer to a cluster from the driver class as driver cluster, clusters from the pedestrian class as pedestrian clusters, and so on.

In a comparative analysis, we also consider the averaging approach and cluster CA with complete data (i.e., including the supplementary variables in the analysis to determine clusters and quantifications). To select the number of clusters for the complete cluster CA analysis, we employed the KL index and obtained $K = 7$ clusters.

\subsection{Result}
\label{sec:real-res}

\subsubsection{MSCCA result}

\begin{figure}
	\begin{center}
		\includegraphics[width=16cm]{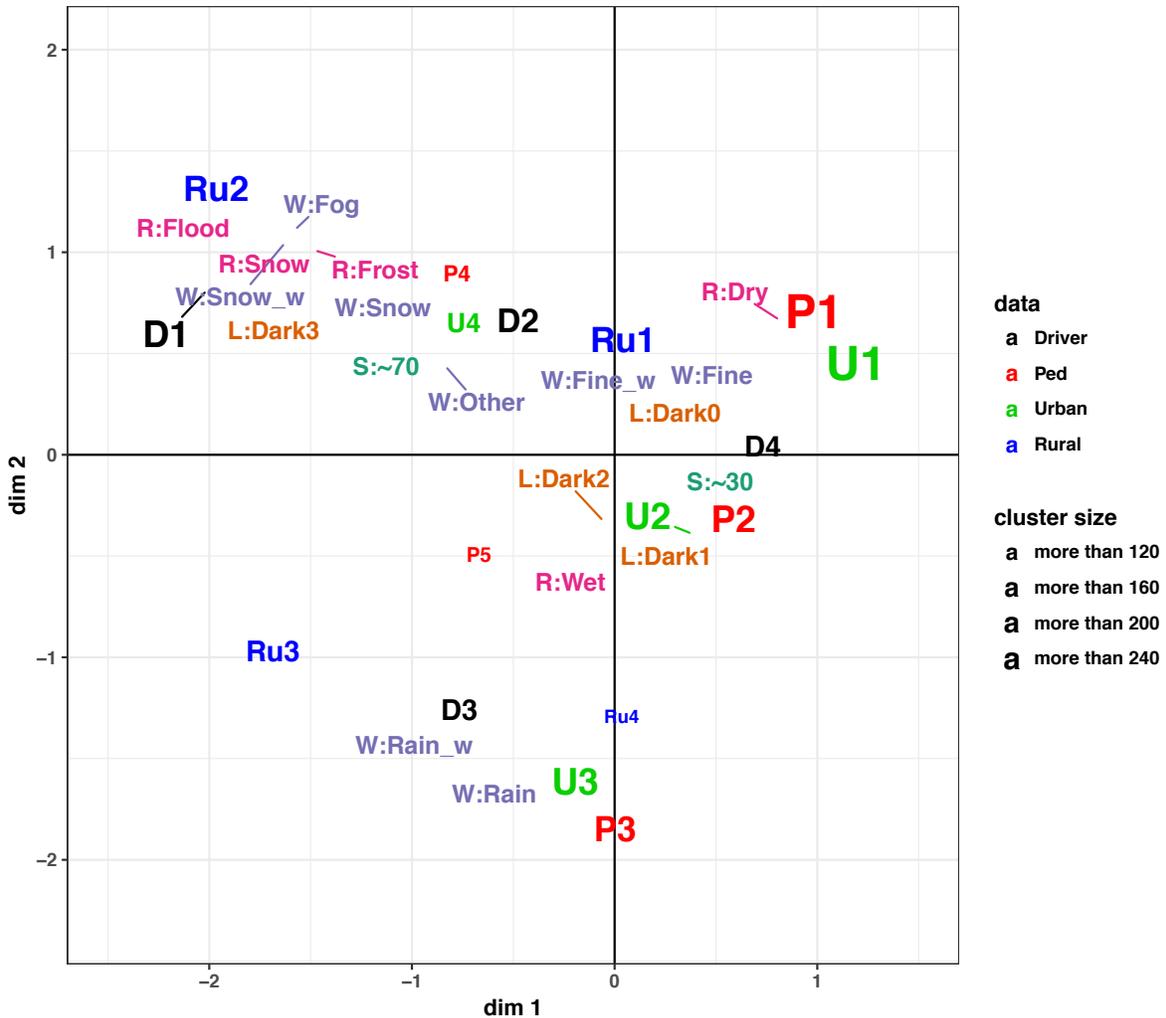}
	\end{center}
	\caption{{\small Results using MSCCA. The numbered labels indicate cluster points with ``D" indicating driver clusters, ``P" corresponding to the pedestrian class, ``U" to the urban class, and ``Ru" to rural class clusters. The numbers reflect the size of each cluster within its class (e.g., ``D1" indicates the largest size cluster in the driver class), as also indicated by the label sizes. Character labels also indicate light conditions ``L", weather conditions ``W", road surface conditions ``R", and speed limits ``S". }}
	\label{fig:real res prop}
\end{figure}

In the biplot for the MSCCA solution (Figure $\ref{fig:real res prop}$), we see that the largest pedestrian clusters, as well as the largest urban and rural clusters (P1, U1, and R1, respectively) are related to categories such as ``Fine," ``Fine\_w," and ``Dry." That is, many accidents in urban and rural areas result in pedestrian casualties and have a strong association with what is generally be considered good driving conditions (e.g., fine weather, dry roads). 

The driver cluster (D1) instead is related to categories such as ``Dark3," ``Snow\_w (weather condition)," and ``Snow (road surface)." Therefore, many accidents that result in driver casualties have a strong association with bad driving conditions, such as a dark night or slippery road. Another driver cluster, close to the good conditions, is the smallest one, indicating that accidents resulting in a driver casualty are less likely under good driving conditions.

In the rural class, we also recognize that though the largest rural cluster is proximal to categories that correspond to good conditions, the second largest rural cluster is close to bad conditions. Therefore, accidents in rural areas occur in both good and bad driving conditions. 

The fourth-largest cluster for rural data and the third-largest clusters for the three other classes indicate similar associations with categories such as ``Rain," ``Rain w," and (to some extent) ``Wet." This indicates that for all classes of supplementary variables, some clusters of accidents occur in rainy weather.

By inspecting the MSCCA biplot and relating the class-specific cluster points to the category quantifications, we can visually perceive how accidents, split into different classes, relate differently to weather and road conditions. For example, for pedestrians, the risk of casualties exists even in favorable conditions, but accidents involving drivers are more strongly related to bad conditions.

\subsubsection{Averaging approach results}

\begin{figure}
	\begin{center}
		\includegraphics[width=12cm]{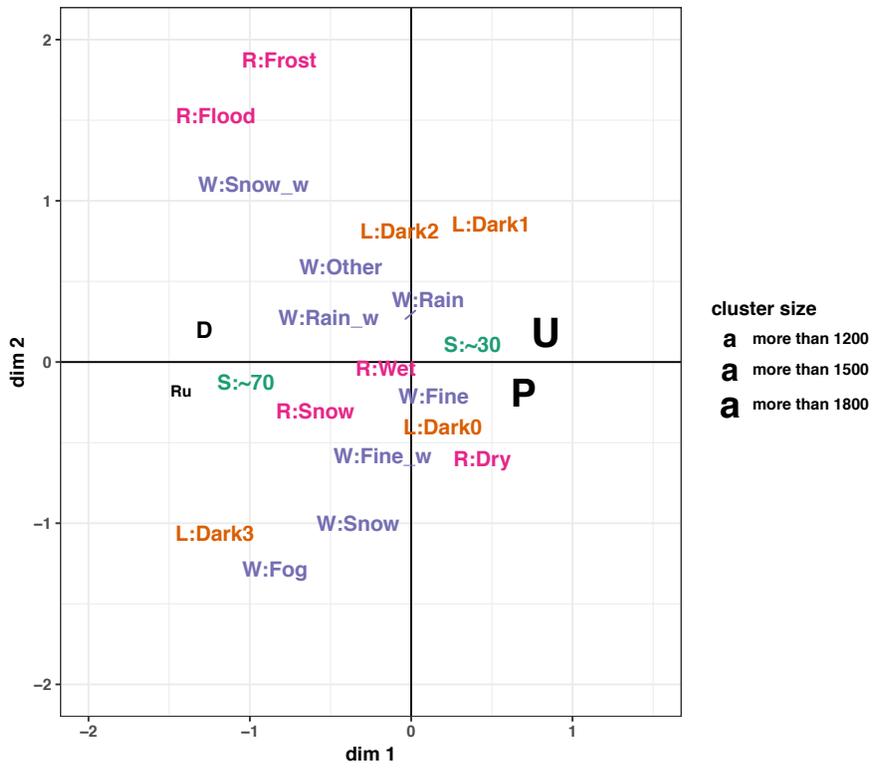}
	\end{center}
	\caption{{\small Results using the averaging approach. The character labels ``D", ``P", ``U" and ``Ru" indicate classes defined by supplementary variables, label sizes correspond to class sizes. Other character labels indicate category points, same as Figure $\ref{fig:real res prop}$.}}
	\label{fig:real res ave}
\end{figure}

The results using the averaging approach are in Figure $\ref{fig:real res ave}$. We can still interpret the information regarding classes with respect to categories, but the averaging of the results limits the available information. Specifically, we see that ``Driver" and ``Rural" relate to categories indicating bad driving conditions (e.g., ``$\sim$70", ``Show"), while ``Pedestrian" and ``Urban" are related to categories corresponding to good driving conditions (e.g., ``$\sim$30", ``Fine", ``Dark0"). However, it is difficult to interpret the relationship between classes and categories that are not close to the class quantifications. Averaging limits us to interpreting tendencies that many accidents in each class have in common. Differentiation with respect to smaller, relatively homogeneous subgroups is no longer possible.

\subsubsection{Cluster CA result}

\begin{figure}
	\begin{center}
		\includegraphics[width=12cm]{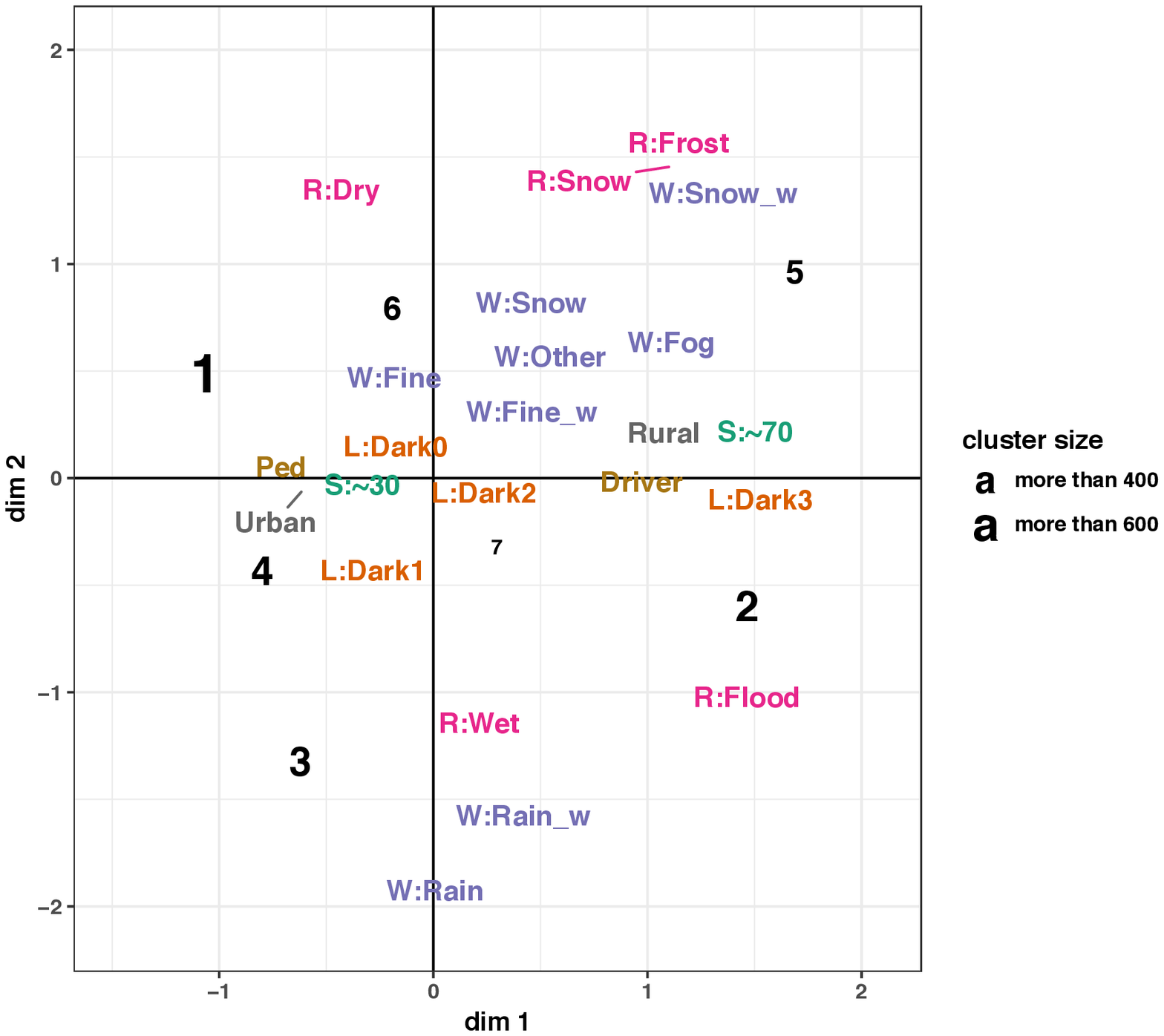}
	\end{center}
	\caption{{\small Results using cluster CA. Numbered labels indicate cluster points and the numbers are ordered according to the sizes of the clusters. The other labels are the same as in Figures $\ref{fig:real res prop}$ and $\ref{fig:real res ave}$.}}
	\label{fig:real res cca}
\end{figure}

Figure $\ref{fig:real res cca}$ shows the results using the cluster CA approach. In contrast with the averaging approach, we can now distinguish different clusters corresponding to several accident tendencies. For example, we find a cluster associated with rain-related categories, whereas this relationship was not clear in the averaging results. Yet the cluster CA approach still limits interpretations with respect to classes. For example, we can see that ``Pedestrian" and ``Urban" are related to good driving conditions, but we cannot interpret the relationship between the ``Pedestrian" and ``Urban" class in conditions such as rainy or bad driving conditions (e.g, ``$\sim$70" and ``Dark3"). In contrast, with MSCCA, we can better interpret these relationships (e.g., we can see that the ``Pedestrian" class has a weaker association with bad driving conditions than with good ones or with rainy conditions, because the smallest pedestrian cluster is closest to bad driving conditions.)

\subsection{Conclusions of Application}

In this section, we have compared three visualization results to appraise differences in how the biplots incorporate external information. All three methods can identify situations in which many accidents occur in each class. However, only by using MSCCA were we able to differentiate across conditions in which many or few accidents occurred. Specifically, this method reveals that relatively many accidents in the ``Pedestrian" and ``Urban" classes occur when conditions are good, but fewer occur when conditions are bad. Conversely, for the ``Driver" class, accidents predominantly occur under bad conditions, with only a few appearing when conditions are good. For accidents corresponding to the ``Rural" class, we find that they occur in both good and bad conditions. Finally, for all classes, we uncover relatively small clusters of accidents that relate strongly to rainy conditions.

%
\section{Conclusion}
\label{sec:discus}
%

We have proposed a new approach to incorporate and interpret external information in a biplot for categorical data. Specifically, we introduce a multiple-set extension to cluster CA, MSCCA, that can visually establish the relationship between external information and categories. In MSCCA, the class-specific clusters obtained make it possible to identify heterogeneous tendencies within each class. We also show how MSCCA relates to a linear row constraint framework.

To investigate the performance of this proposed method, we consider different conditions, according to a simulation study. The results show that increasing the number of supplementary variables $H$ has little effect on cluster or biplot accuracies. However, the results are better if the supplementary variables feature few categories and a balanced distribution over categories.

Then with an empirical analysis of road accident data, we show that that the averaging and cluster CA approaches can uncover only tendencies corresponding to the majority of accidents in each class. The MSCCA biplot instead makes it possible to interpret heterogeneous tendencies within each class, regardless of cluster sizes.

Finally, MSCCA can be applied to different settings. In particular, it could be adopted in a three-way setting to depict the relationship among multiple two-way data sets. For example, if we have $N \times m$ categorical data sets corresponding to $T$ different time points, we could use MSCCA to reveal the relationships among clusters at different times.

\nocite{greenacre2013contribution}


\newpage
\begin{appendices}

%
\section{Proof}
\label{sec:proof}
%
In this Appendix, we consider several propositions regarding MSCCA. Because MSCCA is an extension of cluster CA, without loss of generality, we provide the proof for cluster CA.

\begin{prop}\label{prop:groupals and cca}
	Two optimization problems
	\begin{gather}
	\min\phi(\bm{U}, \bm{G}, \bm{B}\,|\,\bm{Z})=\frac{1}{Nm}\sum_{j=1}^m\|\bm{U}\bm{G}-\bm{Z}_{j}\bm{B}_j\|^2\label{eq:ap grpals}\\
	{\rm s.t.}\quad\frac{1}{Nm}\sum_{j=1}^m\bm{B}_j^{\prime}\bm{Z}_{j}^{\prime}\bm{Z}_{j}\bm{B}_j=\bm{I}_p,\quad\bm{J}_{N}\bm{U}\bm{G}=\bm{U}\bm{G}\nonumber
	\end{gather}
	and 
	\begin{gather}
	\max\psi(\bm{U}, \bm{B}\,|\,\bm{Z})=\tr\bm{B}^{\prime}\bm{Z}^{\prime}\bm{J}_N\bm{U}^{\prime}(\bm{U}^{\prime}\bm{U})^{-1}\bm{U}^{\prime}\bm{J}_N\bm{ZB}\label{eq:ap cca}\\
	{\rm s.t.}\quad\frac{1}{Nm}\sum_{j=1}^m\bm{B}_j^{\prime}\bm{Z}_{j}^{\prime}\bm{Z}_{j}\bm{B}_j=\bm{I}_p\nonumber
	\end{gather}
	are equivalent.
\end{prop}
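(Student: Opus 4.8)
\quad The plan is to profile $\bm{G}$ out of the minimization problem $(\ref{eq:ap grpals})$ in closed form, observe that the resulting optimal value depends on $(\bm{U},\bm{B})$ only through the objective of $(\ref{eq:ap cca})$ plus a constant pinned down by the normalization, and then conclude that the two problems — which share the same feasible set for $(\bm{U},\bm{B})$ — have the same optimizers. Throughout write $\bm{Z}=(\bm{Z}_1,\ldots,\bm{Z}_m)$ and $\bm{B}=(\bm{B}_1^{\prime},\ldots,\bm{B}_m^{\prime})^{\prime}$ so that $\bm{Z}\bm{B}=\sum_{j}\bm{Z}_j\bm{B}_j$, and set $\bm{P}_{\bm{U}}=\bm{U}(\bm{U}^{\prime}\bm{U})^{-1}\bm{U}^{\prime}$ (the matrix appearing in $\psi$).

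First I would dispose of the centering matrix. Since $\bm{J}_N$ is an orthogonal projector and the constraint gives $\bm{J}_N\bm{U}\bm{G}=\bm{U}\bm{G}$, for each $j$ we get the orthogonal split
\[
\|\bm{U}\bm{G}-\bm{Z}_j\bm{B}_j\|^2=\|\bm{U}\bm{G}-\bm{J}_N\bm{Z}_j\bm{B}_j\|^2+\|(\bm{I}_N-\bm{J}_N)\bm{Z}_j\bm{B}_j\|^2 .
\]
The second term is free of $\bm{U},\bm{G}$, so for fixed $(\bm{U},\bm{B})$ minimizing over $\bm{G}$ amounts to minimizing $\sum_j\|\bm{U}\bm{G}-\bm{J}_N\bm{Z}_j\bm{B}_j\|^2$; expanding the sum and completing the square rewrites this as $m\|\bm{U}\bm{G}-\tfrac1m\bm{J}_N\bm{Z}\bm{B}\|^2$ plus a term free of $\bm{G}$, so the minimizer is $\bm{G}=\tfrac1m(\bm{U}^{\prime}\bm{U})^{-1}\bm{U}^{\prime}\bm{J}_N\bm{Z}\bm{B}$, i.e.\ $\bm{U}\bm{G}=\tfrac1m\bm{P}_{\bm{U}}\bm{J}_N\bm{Z}\bm{B}$. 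A brief check is that this unconstrained-in-$\bm{G}$ minimizer respects $\bm{J}_N\bm{U}\bm{G}=\bm{U}\bm{G}$: the rows of $\bm{U}$ sum to one, so $\bm{1}_N\in\mathrm{col}(\bm{U})$, hence $\bm{P}_{\bm{U}}\bm{1}_N=\bm{1}_N$, so $\bm{P}_{\bm{U}}$ commutes with $\bm{J}_N$ and $\bm{J}_N\bm{U}\bm{G}=\bm{P}_{\bm{U}}\bm{J}_N\bm{J}_N\tfrac1m\bm{Z}\bm{B}=\bm{U}\bm{G}$.

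Substituting this minimizer back and using $\bm{P}_{\bm{U}}^2=\bm{P}_{\bm{U}}$ gives $\min_{\bm{G}}\sum_j\|\bm{U}\bm{G}-\bm{J}_N\bm{Z}_j\bm{B}_j\|^2=\sum_j\|\bm{J}_N\bm{Z}_j\bm{B}_j\|^2-\tfrac1m\tr\bm{B}^{\prime}\bm{Z}^{\prime}\bm{J}_N\bm{P}_{\bm{U}}\bm{J}_N\bm{Z}\bm{B}$. Adding back the separated terms and again using that $\bm{J}_N$ is an orthogonal projector recombines $\sum_j\|\bm{J}_N\bm{Z}_j\bm{B}_j\|^2+\sum_j\|(\bm{I}_N-\bm{J}_N)\bm{Z}_j\bm{B}_j\|^2=\sum_j\tr\bm{B}_j^{\prime}\bm{Z}_j^{\prime}\bm{Z}_j\bm{B}_j$, which by the normalization $\tfrac1{Nm}\sum_j\bm{B}_j^{\prime}\bm{Z}_j^{\prime}\bm{Z}_j\bm{B}_j=\bm{I}_p$ equals the constant $Nmp$. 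Hence, over the feasible set,
\[
\min_{\bm{G}}\phi(\bm{U},\bm{G},\bm{B}\,|\,\bm{Z})=p-\frac{1}{Nm^2}\tr\bm{B}^{\prime}\bm{Z}^{\prime}\bm{J}_N\bm{P}_{\bm{U}}\bm{J}_N\bm{Z}\bm{B}=p-\frac{1}{Nm^2}\,\psi(\bm{U},\bm{B}\,|\,\bm{Z}).
\]
Since the constraint sets for $(\bm{U},\bm{B})$ coincide in $(\ref{eq:ap grpals})$ and $(\ref{eq:ap cca})$ and $1/(Nm^2)>0$, minimizing the left-hand side over $(\bm{U},\bm{B})$ is equivalent to maximizing $\psi$; any optimal $(\bm{U},\bm{B})$ for $(\ref{eq:ap cca})$, paired with the closed-form $\bm{G}$ above, solves $(\ref{eq:ap grpals})$, and conversely.

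The only part that takes care is the bookkeeping around $\bm{J}_N$: justifying the orthogonal split that lets the data be pre-centered, confirming that the profiled-out $\bm{G}$ still satisfies $\bm{J}_N\bm{U}\bm{G}=\bm{U}\bm{G}$ (via $\bm{1}_N\in\mathrm{col}(\bm{U})$), and noticing that the $\bm{B}$-dependent remainder $\sum_j\|\bm{Z}_j\bm{B}_j\|^2$ is frozen by the chosen normalization — this last observation is exactly what turns the least-squares minimization into the trace maximization. Everything else is routine trace algebra.
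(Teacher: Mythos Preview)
Your proof is correct and follows essentially the same route as the paper: profile out $\bm{G}$ in closed form and observe that, under the normalization constraint, the concentrated objective equals $p-(Nm^2)^{-1}\psi(\bm{U},\bm{B})$, so minimizing $\phi$ over $(\bm{U},\bm{G},\bm{B})$ is equivalent to maximizing $\psi$ over $(\bm{U},\bm{B})$. Your version is in fact tidier than the paper's---you do the profiling once rather than splitting into separate ``$\bm{U}$ fixed'' and ``$\bm{B}$ fixed'' cases---and more careful on one point the paper glosses over: you explicitly verify that the closed-form $\bm{G}$ satisfies the centering constraint $\bm{J}_N\bm{U}\bm{G}=\bm{U}\bm{G}$, using $\bm{1}_N\in\mathrm{col}(\bm{U})$ to get $\bm{P}_{\bm{U}}\bm{J}_N=\bm{J}_N\bm{P}_{\bm{U}}$.
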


\begin{proof}
	At first the equivalence is shown when $\bm{U}$ is fixed. Considering the constraints, Equation $(\ref{eq:ap grpals})$ can be rewritten as
	\begin{align*}
	\phi&=\frac{1}{Nm}\sum_{j=1}^m\|\bm{U}\bm{G}-\bm{Z}_{j}\bm{B}_j\|^2\\
	&=\frac{1}{Nm}\left(m\tr\bm{G}^{\prime}\bm{U}^{\prime}\bm{UG}-2\tr\sum_{j=1}^m\bm{B}_j^{\prime}\bm{Z}_j^{\prime}\bm{UG}+\tr\sum_{j=1}^m\bm{B}_j^{\prime}\bm{Z}_j^{\prime}\bm{Z}_j\bm{B}_j\right)\\
	&=\frac{1}{N}\tr\bm{G}^{\prime}\bm{U}^{\prime}\bm{UG}-\frac{2}{Nm}\tr\bm{B}^{\prime}\bm{Z}^{\prime}\bm{UG}+p,&
	\end{align*}
	because $\frac{1}{Nm}\sum_{j=1}^m\bm{B}_j^{\prime}\bm{Z}_j^{\prime}\bm{Z}_j\bm{B}_j=\bm{I}_p$. Using $\bm{J}_{N}\bm{U}\bm{G}=\bm{U}\bm{G}$ and omitting the constant, this minimization will be
	\begin{align}
	\frac{1}{N}\tr\bm{G}^{\prime}\bm{U}^{\prime}\bm{UG}-\frac{2}{Nm}\tr\bm{B}^{\prime}\bm{Z}^{\prime}\bm{J}_N\bm{UG}\label{eq:ap prf1}
	\end{align}
	Solving this for $\bm{G}$, we obtain
	\begin{gather*}
	\bm{G}=\frac{1}{m}(\bm{U}^{\prime}\bm{U})^{-1}\bm{U}^{\prime}\bm{J}_N\bm{ZB}
	\end{gather*}
	Inserting this in Equation ($\ref{eq:ap prf1}$), it will be
	\begin{align*}
	&\frac{1}{Nm^2}\tr\bm{B}^{\prime}\bm{Z}^{\prime}\bm{J}_N\bm{U}^{\prime}(\bm{U}^{\prime}\bm{U})^{-1}\bm{U}^{\prime}\bm{J}_N\bm{ZB}-\frac{2}{Nm^2}\tr\bm{B}^{\prime}\bm{Z}^{\prime}\bm{J}_N\bm{U}^{\prime}(\bm{U}^{\prime}\bm{U})^{-1}\bm{U}^{\prime}\bm{J}_N\bm{ZB}\\
	&=-\frac{1}{Nm^2}\tr\bm{B}^{\prime}\bm{Z}^{\prime}\bm{J}_N\bm{U}^{\prime}(\bm{U}^{\prime}\bm{U})^{-1}\bm{U}^{\prime}\bm{J}_N\bm{ZB}
	\end{align*}
	Minimizing this is equivalent to maximizing Equation ($\ref{eq:ap cca}$).
	Next, the equivalence is shown when $\bm{B}$ is fixed and $\bm{U}$ is not. At first, a \textit{k}-means type optimization problem
	\begin{align*}
	\min_{\bm{U},\bm{G}}\|\bm{UG}-\bm{J}_N\bm{ZB}\|^2
	\end{align*}
	is equivalent to the optimization problem in Equation ($\ref{eq:ap cca}$), since this can be rewritten as
	\begin{align}
	\|\bm{UG}-\bm{J}_N\bm{ZB}\|^2&=\tr\bm{G}^{\prime}\bm{U}^{\prime}\bm{UG}-2\tr\bm{B}^{\prime}\bm{Z}^{\prime}\bm{J}_N\bm{UG}+\tr\bm{B}^{\prime}\bm{Z}^{\prime}\bm{J}_N\bm{ZB}\nonumber\\
	&=-\tr\bm{B}^{\prime}\bm{Z}^{\prime}\bm{J}_N\bm{U}^{\prime}(\bm{U}^{\prime}\bm{U})^{-1}\bm{U}^{\prime}\bm{J}_N\bm{ZB}+\tr\bm{B}^{\prime}\bm{Z}^{\prime}\bm{J}_N\bm{ZB}\label{eq:grpals cca U 1}
	\end{align}
	Here, we use $\bm{G}=m^{-1}(\bm{U}^{\prime}\bm{U})^{-1}\bm{U}^{\prime}\bm{J}_N\bm{ZB}$. Omitting a constant term, minimizing Equation ($\ref{eq:grpals cca U 1}$) is equivalent to maximizing Equation ($\ref{eq:ap cca}$). On the other hand, with $\bm{B}_j$ ($j=1,\ldots,m$) fixed, Equation ($\ref{eq:ap grpals}$) can be written as
	\begin{align*}
	\sum_{j=1}^m\|\bm{U}\bm{G}-\bm{Z}_{j}\bm{B}_j\|^2&=\|\bm{U}\bm{G}-\bm{Z}\bm{B}\|^2\\
	&=\tr\bm{G}^{\prime}\bm{U}^{\prime}\bm{UG}-2\tr\bm{B}^{\prime}\bm{Z}^{\prime}\bm{J}_N\bm{UG}+\tr\bm{B}^{\prime}\bm{Z}^{\prime}\bm{ZB}\\
	&=-\tr\bm{B}^{\prime}\bm{Z}^{\prime}\bm{J}_N\bm{U}^{\prime}(\bm{U}^{\prime}\bm{U})^{-1}\bm{U}^{\prime}\bm{J}_N\bm{ZB}+\tr\bm{B}^{\prime}\bm{Z}^{\prime}\bm{ZB}
	\end{align*}
	This is the same as Equation ($\ref{eq:grpals cca U 1}$). Thus, we obtain the proposition.
\end{proof}

\begin{prop}\label{prop:CCA and linear const}
	Minimizing Equation $(\ref{eq:ap grpals})$ with respect to $\bm{B}$ is equivalent to minimizing
	\begin{gather}
	\phi^{const}(\bm{B}\,|\,\bm{Z},\bm{U})=\frac{1}{Nm}\sum_{j=1}^{m}\|\bm{C}\bm{F}-\bm{Z}_{j}\bm{B}_{j}\|^2\label{eq:ap linear const}\\
	{\rm s.t.}\,\,\,\frac{1}{Nm}\sum_{j=1}^{m}\bm{B}_{j}^{\prime}\bm{Z}_{j}^{\prime}\bm{Z}_{j}\bm{B}_{j}=\bm{I}_p,\,\,\,\bm{J}_N\bm{CF}=\bm{CF},\,\,\,
	\where\,\,\,\bm{C}=\bm{U}(\bm{U}^{\prime}\bm{U})^{-1}\bm{U}^{\prime}\nonumber
	\end{gather}
\end{prop}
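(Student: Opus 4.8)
The plan is to show that the two minimization problems are in fact the \emph{same} problem, by identifying the ``interaction term'' $\bm{U}\bm{G}$ in Equation~(\ref{eq:ap grpals}) with the term $\bm{C}\bm{F}$ in Equation~(\ref{eq:ap linear const}). The crucial observation is that, assuming each cluster is nonempty so that $\bm{U}$ has full column rank, $\bm{C}=\bm{U}(\bm{U}^{\prime}\bm{U})^{-1}\bm{U}^{\prime}$ is the orthogonal projector onto the column space of $\bm{U}$; in particular $\bm{C}\bm{U}=\bm{U}$ and $\bm{C}=\bm{C}^{\prime}=\bm{C}^2$. Consequently the set of matrices expressible as $\bm{C}\bm{F}$, as $\bm{F}$ ranges over all $N\times p$ matrices, equals the set of $N\times p$ matrices whose columns lie in $\mathrm{col}(\bm{U})$, which is the same as the set of matrices expressible as $\bm{U}\bm{G}$, as $\bm{G}$ ranges over all $K\times p$ matrices.

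Concretely, I would exhibit the correspondence explicitly. Given any feasible $\bm{G}$ in Equation~(\ref{eq:ap grpals}), put $\bm{F}=\bm{U}\bm{G}$; then $\bm{C}\bm{F}=\bm{C}\bm{U}\bm{G}=\bm{U}\bm{G}$, so both the objective value and the centering constraint $\bm{J}_N\bm{U}\bm{G}=\bm{U}\bm{G}$ carry over verbatim to $\bm{C}\bm{F}$. Conversely, given any feasible $\bm{F}$ in Equation~(\ref{eq:ap linear const}), put $\bm{G}=(\bm{U}^{\prime}\bm{U})^{-1}\bm{U}^{\prime}\bm{F}$; then $\bm{U}\bm{G}=\bm{C}\bm{F}$, so the objective value and the constraint $\bm{J}_N\bm{C}\bm{F}=\bm{C}\bm{F}$ transfer to $\bm{U}\bm{G}$. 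Under this correspondence the two feasible sets coincide and the two objective functions agree term by term (both have the form $\tfrac{1}{Nm}\sum_{j=1}^m\|\,\cdot\,-\bm{Z}_j\bm{B}_j\|^2$ and impose the identical normalization $\tfrac{1}{Nm}\sum_j\bm{B}_j^{\prime}\bm{Z}_j^{\prime}\bm{Z}_j\bm{B}_j=\bm{I}_p$ on $\bm{B}$). Hence, after eliminating the auxiliary variable ($\bm{G}$ in one case, $\bm{F}$ in the other), the \emph{same} function of $\bm{B}$ is minimized over the \emph{same} constraint set, so $\bm{B}$ solves one problem if and only if it solves the other; and by Proposition~\ref{prop:groupals and cca} this common reduced problem is equivalent to maximizing $\psi$ in Equation~(\ref{eq:ap cca}).

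I do not expect a genuinely hard step here: the argument is essentially bookkeeping once one recognizes $\bm{C}$ as the projector onto $\mathrm{col}(\bm{U})$. The only points needing care are the implicit full-column-rank assumption on $\bm{U}$ (nonempty clusters), required for $(\bm{U}^{\prime}\bm{U})^{-1}$ to exist, and the verification that the two centering restrictions really match under the parametrization change, which follows immediately from $\bm{C}\bm{U}=\bm{U}$ and $\bm{U}(\bm{U}^{\prime}\bm{U})^{-1}\bm{U}^{\prime}\bm{F}=\bm{C}\bm{F}$. As a cross-check one could instead minimize Equation~(\ref{eq:ap linear const}) over $\bm{F}$ directly: a stationarity computation (using the centering constraint to replace $\bm{Z}$ by $\bm{J}_N\bm{Z}$ in the cross term) gives $\bm{C}\bm{F}^{\star}=\tfrac{1}{m}\bm{C}\bm{J}_N\bm{Z}\bm{B}=\tfrac{1}{m}\bm{U}(\bm{U}^{\prime}\bm{U})^{-1}\bm{U}^{\prime}\bm{J}_N\bm{Z}\bm{B}$, which is exactly $\bm{U}\bm{G}$ for the $\bm{G}$ obtained in the proof of Proposition~\ref{prop:groupals and cca}; I would mention this but prefer the shorter feasible-set argument.
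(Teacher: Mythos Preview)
Your argument is correct, and it differs from the route taken in the paper. The paper proceeds computationally: it expands $\phi^{const}$ using the constraints, minimizes out $\bm{F}$ to obtain $\bm{F}=\tfrac{1}{m}\bm{J}_N\bm{Z}\bm{B}$, substitutes back to reach $-\tfrac{1}{Nm^2}\tr\bm{B}^{\prime}\bm{Z}^{\prime}\bm{J}_N\bm{U}(\bm{U}^{\prime}\bm{U})^{-1}\bm{U}^{\prime}\bm{J}_N\bm{Z}\bm{B}$, and then observes this is the same reduced objective obtained in the proof of Proposition~\ref{prop:groupals and cca}. Your approach instead short-circuits the calculation by recognizing $\bm{C}$ as the orthogonal projector onto $\mathrm{col}(\bm{U})$ and noting that $\{\bm{C}\bm{F}:\bm{F}\in\mathbb{R}^{N\times p}\}=\{\bm{U}\bm{G}:\bm{G}\in\mathbb{R}^{K\times p}\}$, so the two problems are literally the same problem under a reparametrization of the auxiliary variable; the constraints and objective match term by term. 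What your argument buys is brevity and conceptual clarity (no elimination step is needed), while the paper's computation has the side benefit of exhibiting the optimal $\bm{F}$ explicitly and linking directly to the trace-maximization form in Equation~(\ref{eq:ap cca}). The ``cross-check'' you sketch at the end is essentially the paper's proof, so you have in fact covered both routes.
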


\begin{proof}
	Using constraints, Equation $(\ref{eq:ap linear const})$ can be rewritten as
	\begin{align}
	\phi^{const}=\frac{1}{N}\tr\bm{F}^{\prime}\bm{CF}-\frac{2}{Nm}\tr\bm{B}^{\prime}\bm{Z}^{\prime}\bm{J}_N\bm{CF}\label{eq:ap const 1}
	\end{align}
	Solving this for $\bm{F}$, we obtain
	\begin{gather*}
	\bm{F}=\frac{1}{m}\bm{J}_N\bm{ZB}
	\end{gather*}
	Inserting this into Equation ($\ref{eq:ap const 1}$), we obtain a minimization problem of
	\begin{align}	
	-\frac{1}{Nm^2}\tr\bm{B}^{\prime}\bm{Z}^{\prime}\bm{J}_N\bm{U}(\bm{U}^{\prime}\bm{U})^{-1}\bm{U}^{\prime}\bm{J}_N\bm{ZB}.\label{eq:min of const}
	\end{align}
	On the other hand, using the proof in Proposition $\ref{prop:groupals and cca}$, the optimization problem in Equation ($\ref{eq:ap grpals}$) can also be rewritten as ($\ref{eq:min of const}$). Thus we obtain the proposition.
\end{proof}

%
\section{Comparison of MSCCA and averaging approach}
\label{sec:how it works}
%
\begin{figure}
	\begin{center}
		\includegraphics[width=13cm]{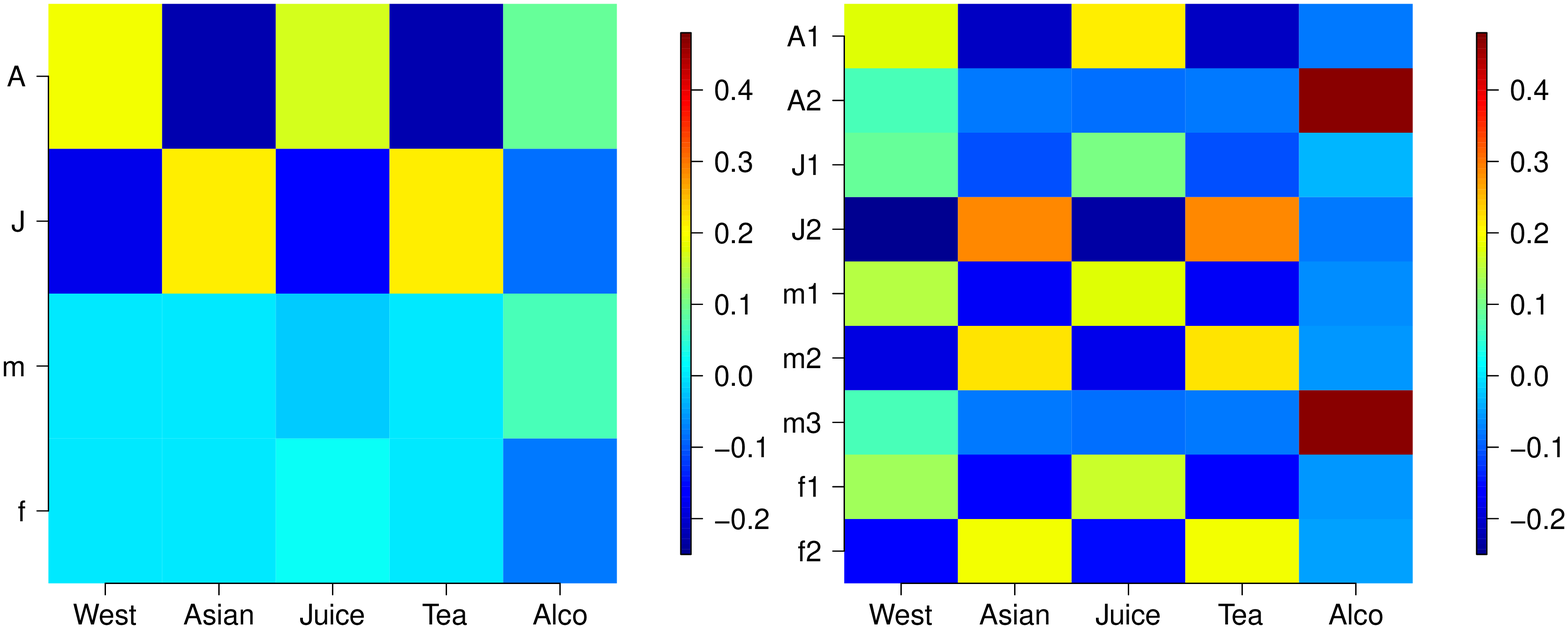}
		\caption{{\small Heatmaps of $\widetilde{\bm{P}}^{ave}$ (left) and $\widetilde{\bm{P}}^{MSCCA}$ (right), respectively. These matrices are calculated as in Equation $(\ref{eq:DPD})$, using $\bm{P}^{ave}=(NHm)^{-1}\bm{V}^{\prime}\bm{Z}^H$ and $\bm{P}^{MSCCA}=(NHm)^{-1}\bm{U}^{\prime}\bm{Z}^H$, respectively.}}
		\label{fig:heatmap}
	\end{center}
\end{figure}

To understand why MSCCA can depict heterogeneous tendencies more clearly than the averaging approach, in this Appendix, we compare the methods that the two approaches use to calculate associations between classes and categories. That is, both MSCCA and averaging reflect a CA framework. Averaging is equivalent to CA for the $\sum_{h=1}^Hr_h \times Q$ contingency table (row is class, column is category); MSCCA is equivalent to CA for the $\sum_{h=1}^H\sum_{s=1}^{r_h}K_{hs} \times Q$ contingency table (row is clusters in each class, column is category), for a given cluster allocation. Figure $\ref{fig:heatmap}$ shows heat maps of the relative deviations, $\widetilde{\bm{P}}^{ave}$ and $\widetilde{\bm{P}}^{MSCCA}$, for each method calculated based on their respective contingency tables. Thus, using this framework, we can say that the difference between the two methods is whether the rows of the contingency table are split by clusters in each class.

This factor then distinguishes between averaging and MSCCA in the calculation of the expected frequency, $\bm{rc}^{\prime}$. Specifically, in the averaging approach, the expected frequency in the (3,1) element in $\widetilde{\bm{P}}^{ave}$ is calculated using the number of individuals who are American and choose ``alcohol", whereas that for the (2,5) element in $\widetilde{\bm{P}}^{MSCCA}$ results from calculating the number of individuals who are in the second cluster in the American class and choose ``alcohol". That is, in MSCCA, the number of individuals used to calculate expected frequency is less for each row in the contingency table than the number for the averaging approach.

Note that the relative deviation indicates the size of the observed frequency (i.e., the number of individuals choosing a particular category), compared with the expected frequency (i.e., the expected number of individuals choosing the category under an assumption of independence). Therefore, the relative deviation tends to increase when the expected frequency is calculated using the limited number of individuals who select the same categories. 

Thus in MSCCA, clustering individuals for each class reveals  the heterogeneous tendencies within each class clearly, regardless of the size of the groups that exhibit similar tendencies.

\end{appendices}

\end{document}